\newcommand{\tr}[1]{\mathrm{Tr}\left[ #1 \right]}
\newtheorem{theorem}{Theorem}
\newtheorem{theo}{Theorem}
\newtheorem{lem}{Lemma}
\newtheorem{cor}{Corollary}
\newtheorem{corollary}{Corollary}
\newtheorem{prop}{Proposition}
\newtheorem{proposition}{Proposition}
\begin{document}


\title{Native linear-optical protocol for efficient multivariate trace estimation}


\author{Leonardo Novo}
\email{leonardo.novo@inl.int}
\affiliation{International Iberian Nanotechnology Laboratory (INL), Av. Mestre Jos\'e Veiga, 4715-330 Braga, Portugal}

\author{Marco Robbio}
\affiliation{International Iberian Nanotechnology Laboratory (INL), Av. Mestre Jos\'e Veiga, 4715-330 Braga, Portugal}
\affiliation{Centre for Quantum Information and Communication, \'Ecole polytechnique de Bruxelles, CP 165/59, Universit\'e libre de Bruxelles, 1050 Brussels, Belgium}

\author{Ernesto F. Galvão}
\affiliation{International Iberian Nanotechnology Laboratory (INL), Av. Mestre Jos\'e Veiga, 4715-330 Braga, Portugal}
\affiliation{Instituto de F\'isica, Universidade Federal Fluminense, Av. Gal. Milton Tavares de Souza s/n, Niter\'oi, RJ, 24210-340, Brazil}

\author{Nicolas J. Cerf}
\affiliation{Centre for Quantum Information and Communication, \'Ecole polytechnique de Bruxelles, CP 165/59, Universit\'e libre de Bruxelles, 1050 Brussels, Belgium}


\begin{abstract}
The Hong-Ou-Mandel test estimates the overlap between spectral functions characterizing the internal degrees of freedom of two single photons. It can be viewed as a photon-native protocol that implements the well-known quantum SWAP test. Here, we propose a native linear-optical protocol that efficiently estimates multivariate traces of quantum states called Bargmann invariants, which are ubiquitous in quantum mechanics. Our protocol may be understood as a photon-native version of the cycle test in the circuit model, which encompasses many-photon multimode quantum states. We show the protocol is sample-efficient and discuss applications, such as generalized suppression laws, efficient quantum kernel estimation for quantum machine learning, eigenspectrum estimation, and the characterization of multiphoton indistinguishability.
\end{abstract}

\maketitle

While photons are ideal carriers of quantum information due to their long coherence times, a major obstacle in building a universal photonic quantum computer is the fact that they do not naturally interact. Indeed, implementing two-qubit gates in photonics is costly and requires either probabilistic schemes \cite{Knill2001, Bartolucci2023}, strong optical non-linearities, or hybrid schemes involving light-matter interaction \cite{lindner2009proposal, huet2025deterministic}. It is thus natural to ask what kind of photon-native quantum algorithms \cite{salavrakos2025photon_native} or useful quantum computation tasks can be implemented deterministically using purely linear optics.

An important example of such photon-native quantum protocols is the SWAP test, which  estimates the overlap between two quantum states. It is well known that, via the Hong-Ou-Mandel (HOM) effect \cite{hongMeasurementSubpicosecondTime1987}, the overlap between two wavefunctions describing the internal degrees of freedom of a single photon (such as its frequency, time delay, polarization, or orbital angular momentum) can be estimated by measuring the coincidence rate after the two photons interfere at an unbiased beam splitter. While this equivalence between the SWAP and HOM tests \cite{HOM_swap} yields a photon-native technique to estimate the overlap between two single-photon wavefunctions in any dimension (i.e., two qudits), it does not provide a method to measure overlaps of arbitrary states of light featuring multiple photons occupying several modes of the electromagnetic field. This is in stark contrast with the circuit model, as the SWAP test can also efficiently estimate overlaps between two $n$-qubit states in an exponentially large Hilbert space, 
leaving a gap between the computational power of the circuit model and that of linear optics for the problem of overlap estimation.

An important generalization of the SWAP test, sometimes called the cycle test \cite{ekert2002direct,ekert2003direct,brun2004}, enables the estimation of the multivariate trace of a tuple of $M$ quantum states $\{\rho_{1}, \ldots, \rho_{M}\}$, given by $\tr{\rho_1 \rho_2 ...\rho_M }$. These quantities, also referred to as Bargmann invariants of order $M$, remain unchanged under the action of the same unitary on every state. As such, they capture structural basis-independent properties of quantum states and measurements~\cite{Oszmaniec2024, wagner2024quantum, Fernandes_Barg, arvidsson24}. For example, their phase corresponds to the Pancharatnam geometric phase~\cite{pancharatnam1956generalized} associated with a sequence of projective measurements, or equivalently, the Berry phase along a cyclic geodesic path~\cite{simon1993bargmann, chruscinski2004geometric}. Moreover, they  have been studied in relation to photonic indistinguishability~\cite{Menssen_17,Menssen_22,annoniIncoherentBehaviorPartially2025}, Kirkwood–Dirac quasi-probability distributions~\cite{bamber2014observing,kirkwood1933quantum}, and more recently applied to topics such as error mitigation~\cite{liang2023unified}, weak values~\cite{wagner2023simple} and out-of-time-ordered correlators~\cite{wagner2024quantum}. 
While the cycle test \cite{ekert2002direct,ekert2003direct,brun2004,Oszmaniec2024} and its smaller depth variants \cite{quek2024multivariate} can efficiently evaluate multivariate traces of multi-qubit quantum states with quantum circuits, an efficient linear-optical protocol for this task is still missing. In fact, existing protocols based on linear optics \cite{Pont_22, wu2022sparse} apply only to single-photon quantum states and have exponentially large sample complexity.

In this work, we close this important gap by introducing a linear-optical protocol based on Fourier interferometry and photo-counting measurements that estimates multivariate traces of general quantum states of light -- possibly with many photons in multiple modes -- without requiring postselection or auxiliary photons.  The complexity of the protocol in terms of the number of optical components depends on the type of encoding used. To give an example, if the $\rho_i$'s are $M$ states of $n$ dual-rail qubits, the protocol requires $2n$ Fourier interferometers, each involving $M$ spatial modes, plus a mesh of $poly(M)$ photonic SWAP operations. 
As another important result, we generalize the suppression laws for Fourier interferometers -- which were previously known for input Fock states -- to arbitrary (many-photon multimode) states of light. Moreover, we show how violations of these generalized suppression laws are the key to estimate multivariate traces of photonic quantum states.

The expected applications are manifold, but we limit ourselves to discussing  some of the most impactful ones, such as efficient Quantum Kernel reconstruction for Quantum Machine Learning, estimation of state spectra, and the characterization of photonic indistinguishability.

\paragraph{Framework --} 
We consider $M$ bosonic quantum states $\{\rho_{1}, \ldots, \rho_{M}\}$, where each $\rho_{j}$ belongs to identical Fock spaces $\mathcal{H}$ associated to multiple bosonic modes (spatial, time-bin, frequency, polarization, etc.).  The corresponding bosonic creation operators are denoted as $a^\dagger_{\alpha}$, where $\alpha$ could also possibly be a continuous parameter. Considering now the larger Hilbert space made by the (Fock) tensor product of $M$ identical Hilbert spaces $\mathcal{H}$, we define the associated creation operators $a^\dagger_{j,\alpha}$, where $j\in \{1,...,M\}$, as well as the initial state 
\begin{equation}\label{eq:in_state}
    \Omega= \rho_1 \otimes \rho_2 \otimes\dots\otimes \rho_M.  
\end{equation}
The degrees of freedom (d.o.f.) labeled by $\alpha$ will be referred to as the internal d.o.f. of each quantum state $\rho_{j}$, whereas the label $j$ refers to the Hilbert spaces associated to the different quantum systems $\rho_j$. 

Our protocol for multivariate trace estimation requires a linear interference process between these systems $\rho_j$, while leaving their internal d.o.f. invariant. Such an interference process is described by a $M\times M$ unitary matrix $U$ acting on the creation operators as 
\begin{equation}\label{eq:linear_interference}
   \hat{U} \hat{a}^{\dagger}_{j, \alpha}   \hat{U}^\dagger= \sum_{k=1}^M U_{k,j} \, \hat{a}^{\dagger}_{k, \alpha}, \forall j, \alpha. 
\end{equation}
Note that the number of physical elements to perform this transformation depends on the encoding of the internal d.o.f.  For example, in the case where $\rho_1$ and $\rho_2$ are two single photons in arbitrary polarization states ($\alpha\in \{H,V\}$), then a single beam splitter (which preserves polarization) can be used to implement this operator. In turn, if $\rho_1$ and $\rho_2$ are two dual-rail photonic qubits, two identical beam splitters
would be needed -- one for interference of the two rails corresponding to the ``$0$" state of each qubit and another for interfering the two rails corresponding to state ``$1$".    

Following the interference process, we consider a particle counting measurement which ignores internal d.o.f. described by  $\hat{N}_j = \sum_\alpha \hat{n}_{j, \alpha}$, where $\hat{n}_{j, \alpha}= \hat{a}^{\dagger}_{j, \alpha}\hat{a}_{j, \alpha}$. Similarly to the examples given above, the number of physical detectors depends on the encoding of the information. In the first scenario, a single polarization-independent photocounter per system would suffice. In the second scenario, each dual-rail qubit would be measured by two detectors (one per rail) and the the act of ignoring the internal d.o.f. $\alpha$ (whether the photons occupy the first or second rail) would be done by classical postprocessing.  The (coarse-grained) outcome associated to each observable $\hat{N}_j$ is noted $S_j$. Then, we denote the joint outcome when measuring these $M$ observables as a pattern vector $\vec{S}= (S_1, \dots, S_M)$ and the corresponding probability distribution as $D_{\vec{S}}$. 
The generalized HOM test between two many-photon multimode states $\rho_1$ and $\rho_2$, which will be introduced later in this work, is an example of the interference and measurement process described above (see Fig.~\ref{fig:generalizedHOM}).

\paragraph{Multivariate trace estimation ---} 
In the circuit model, the cycle test enables multivariate trace estimation \cite{ekert2002direct,ekert2003direct,brun2004, Oszmaniec2024} but requires a cyclic permutation between different quantum states, controlled by an auxiliary qubit, in a circuit known as a Hadamard test. In linear optics, such controlled operations are not readily available. Instead, we use techniques developed in Ref. \cite{daley2012measuring} to enable the measurement of the entanglement dynamics of bosonic atoms in an optical lattice, based on the estimation of traces of powers of a quantum state $\tr{\rho^n}$ by linearly interfering $n$~copies of $\rho$. Such a multicopy technique was also proven to give access to more complex nonlinear functionals of $\rho$, providing linear optical interferometric schemes in order to probe optical uncertainty, nonclassicality, or entanglement \cite{multicopy-cerf-1,multicopy-cerf-2,multicopy-cerf-3,multicopy-cerf-4}. Here, instead of considering $n$~identical copies, we consider a general scenario with an initial state of $M$~different bosonic systems as described in Eq.~\eqref{eq:in_state}.  The cyclic permutation $\hat{C}$ between these systems is a special type of linear interferometer as in \eqref{eq:linear_interference} since it acts on the creation operators as
\begin{equation}
    \hat{C}a^{\dagger}_{j,\alpha}\hat{C}^\dagger = a^{\dagger}_{j+1,\alpha}  \, , \quad \forall j,\alpha ,  
\end{equation}
where the index $j$ is understood modulo $M$. This operator can be be diagonalized by the Fourier interferometer, namely $\hat{C}= \hat{F}\hat{D}\hat{F}^{\dagger}$, with 
\begin{align}\label{eq: Fourier evolution}
    \hat{F}a^{\dagger}_{j,\alpha}\hat{F}^\dagger = \frac{1}{\sqrt{M}}\sum_{k=0}^{M-1}\omega^{jk} a^{\dagger}_{k,\alpha} \, , \quad
    \hat{D}a^{\dagger}_{j,\alpha}\hat{D}^\dagger = \omega^j a^{\dagger}_{j,\alpha} , 
\end{align}
and $\omega=\exp(2\pi i/M)$.
Of course, any diagonal linear interferometer $\hat{D}$ can be decomposed as a set of phase shifters acting in parallel and can be written as 
\begin{equation}
\hat{D}= \prod_{j=0}^{M-1}\exp\left(i \frac{2\pi j }{M}  \hat{N}_j\right)   \,  ,
\end{equation}
which implies that the cyclic shift operator $\hat{C}$ becomes diagonal in the particle number basis after applying the Fourier interferometer (in a sense, the unitary $\hat{C}$ can be viewed as a complex-valued observable). This diagonalization allows us to estimate a multivariate trace with photo-counting and appropriate classical postprocessing of the outcomes $(S_j)$ associated to observable $\hat{N}_j$. More precisely, we have 
\begin{align}\label{eq: multivariate_cycle}
  \Delta_{1..M}&= \tr{\rho_1\rho_2\dots \rho_M}\nonumber\\ 
  &=\tr{\hat{C}\, \rho_1\otimes \rho_2\otimes \dots\otimes \rho_{M}} \nonumber \\
&=\tr{\hat{D}~\Omega_\text{out}}
\end{align}  
where $\Omega_\text{out}$ is the output state obtained at the output of an inverse Fourier interferometer, that is,
\begin{equation}\label{eq:output_Fourier}
 \Omega_\text{out}=\hat{F}^\dagger \rho_1\otimes\rho_2\otimes..\otimes\rho_M  \, \hat{F}.
\end{equation}
Thus, $\Delta_{1\dots M}$ is accessible as the expectation value of $\hat{D}$ in the output state $\Omega_\text{out}$. The proof of the first step of Eq.~\eqref{eq: multivariate_cycle} is given in the Appendix. We stress that the sole assumption in our protocol is that each state $\rho_j$ belongs to the same Hilbert space $\mathcal{H}$, so that $\rho_j$ can in principle contain an arbitrary number of particles or a superposition of different particle numbers.  
This versatility also allows us to estimate multivariate traces of photonic quantum states encoded in different d.o.f. Each $\rho_j$ could, for example, be a highly entangled state prepared by a universal photonic quantum computer. Furthermore, the protocol is efficient in the sense that an additive error on the estimate of any  $\Delta_{1\dots M}$ can be obtained by sampling the outcomes of our linear optical circuit, where the number of samples scales polynomially in the inverse error but is agnostic about the number of states $M$. More precisely, the following proposition can be derived by using Hoeffding's inequality (see Appendix). 
\begin{prop}
    Let $\Omega= \bigotimes_{j} \rho_j$ be a bosonic quantum state where each  $\rho_j$ belongs to a multimode bosonic Fock space $\mathcal{H}$. The multivariate trace $\tr{\rho_1\rho_2...\rho_M}$ can be estimated with probability $1-\delta$ and precision $\epsilon$ with  $O(\epsilon^{-2}\ln\delta^{-1})$ samples. \rm{(See proof in the Appendix.)}
\end{prop}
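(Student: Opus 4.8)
The plan is to recast Eq.~\eqref{eq: multivariate_cycle} as the expectation of a bounded complex-valued random variable and then invoke Hoeffding's inequality. The starting point is the identity $\Delta_{1\dots M}=\tr{\hat D\,\Omega_\text{out}}$ from Eq.~\eqref{eq: multivariate_cycle}, together with the fact that $\hat D=\prod_{j=0}^{M-1}\exp(i\tfrac{2\pi j}{M}\hat N_j)$ is diagonal in the joint total-particle-number basis of the $M$ systems. Hence, if $\vec S=(S_1,\dots,S_M)$ denotes the coarse-grained photo-counting outcome of $\Omega_\text{out}$ (distributed according to $D_{\vec S}$) and we define the classical post-processing function $f(\vec S):=\exp\!\big(i\tfrac{2\pi}{M}\sum_j j S_j\big)=\prod_j\omega^{j S_j}$, then $\tr{\hat D\,\Omega_\text{out}}=\sum_{\vec S}D_{\vec S}\,f(\vec S)=\mathbb E_{\vec S\sim D}[f(\vec S)]$. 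The natural unbiased estimator of $\Delta_{1\dots M}$ is therefore the empirical mean $\hat\Delta=\frac1K\sum_{k=1}^K f(\vec S^{(k)})$ computed from $K$ independent runs of the inverse Fourier interferometer followed by particle counting.

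The crucial observation, and the only point deserving real care, is that $|f(\vec S)|=1$ for \emph{every} outcome $\vec S$, no matter how large the individual $S_j$ are. Consequently $\mathrm{Re}\,f(\vec S^{(k)})$ and $\mathrm{Im}\,f(\vec S^{(k)})$ are i.i.d.\ random variables confined to $[-1,1]$, uniformly over the infinite-dimensional Fock space $\mathcal H$ and over the number of systems $M$. This is precisely the feature that allows the argument to go through without any truncation of the Hilbert space and that makes the final sample complexity agnostic about $M$, the photon number, or the Hilbert-space dimension.

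It then remains to apply Hoeffding's inequality to the real and imaginary parts separately. For i.i.d.\ variables in an interval of width $2$, one gets $\Pr[\,|\mathrm{Re}\,\hat\Delta-\mathrm{Re}\,\Delta_{1\dots M}|\ge \epsilon/\sqrt2\,]\le 2\exp(-K\epsilon^2/4)$, and likewise for the imaginary component. A union bound over the two components, using that $|\hat\Delta-\Delta_{1\dots M}|<\epsilon$ whenever both real and imaginary deviations are below $\epsilon/\sqrt2$, yields $\Pr[\,|\hat\Delta-\Delta_{1\dots M}|\ge\epsilon\,]\le 4\exp(-K\epsilon^2/4)$. Demanding the right-hand side to be at most $\delta$ gives $K\ge 4\epsilon^{-2}\ln(4\delta^{-1})$, i.e.\ $K=O(\epsilon^{-2}\ln\delta^{-1})$ samples, with no dependence on $M$. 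I do not expect a genuine obstacle here: the step that warrants the most attention is simply the uniform boundedness $|f(\vec S)|=1$ over an unbounded Fock space, since this is exactly what separates the present protocol from earlier single-photon schemes with exponential sample complexity; the precise constants (the split of $\epsilon$ between real and imaginary parts, or using a Bernstein/Chernoff-type bound instead of Hoeffding) are immaterial to the stated scaling.
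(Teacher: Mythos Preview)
Your proof is correct. The paper's argument takes a slightly different but equivalent route: rather than treating $\omega^{\sum_j jS_j}$ as a unit-modulus complex random variable and applying Hoeffding to its real and imaginary parts, the paper estimates the real-valued binned probabilities $P_j=\Pr[f(\vec S)=j]$ (each a Bernoulli variable, where the paper's $f$ is the integer $\sum_j jS_j\bmod M$, not your complex exponential) via Hoeffding, and then recovers $X_1=\Delta_{1\dots M}$ through the inverse discrete Fourier transform of Theorem~\ref{theo:violation_suplaws}. The two approaches coincide once one notices that the DFT of the empirical $\hat P_j$'s is precisely your empirical mean $\frac{1}{K}\sum_k\omega^{\sum_j jS_j^{(k)}}$; your presentation is more direct and makes the $M$-independence of the sample complexity immediately transparent, whereas the paper's closing line (``its Fourier Transform $\{X_k\}$ yields the same error scaling'') leaves that step implicit. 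One minor caution: you reuse the symbol $f$ for the complex exponential, while the paper has already reserved it for the integer-valued function in Eq.~\eqref{eq:f_of_S}.
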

As previously mentioned, the number of optical elements to implement the Fourier interferometer according to Eq.~\eqref{eq: Fourier evolution} depends on the physical encoding of the internal d.o.f. $\alpha$. For $M$ dual-rail qubits, this would require two Fourier interferometers of $M$ spatial modes each, with additional permutation of spatial modes so that the 1st rails of all qubits interfere together in the same Fourier interferometer, and similarly for the 2nd rails. More generally, if each $\rho_j$ has $d$ spatial modes and additional internal modes (e.g. frequency or polarization), the protocol would require $d$ Fourier interferometers. 
Known efficient decompositions of multimode interferometers into beam splitters and phase shifters \cite{reckExperimentalRealizationAny1994,clementsOptimalDesignUniversal2017} guarantee that our protocol requires a number of optical elements that is polynomial in $M$ and $d$.

\paragraph{Generalized suppression laws ---} 
An alternative but very useful way to understand how Fourier interferometry allows for multivariate trace estimation involves recognizing a connection with suppression laws, or zero-transmission laws, i.e., the existence of some forbidden input-output transitions in optical interferometers with certain symmetries \cite{tichy2010zero, crespi15, viggianiello18, dittel2018totally}. The simplest such suppression law is the HOM effect, where two indistinguishable photons entering opposite arms of an unbiased beam-splitter always bunch together at the output \cite{hongMeasurementSubpicosecondTime1987}. In the HOM experiment, it is well known that this suppression law is violated as a consequence of imperfect particle indistinguishability. 
The overlap between the internal states of the two interfering single photons can be estimated from the bunching probability (i.e., the complement of probability of the ideally suppressed coincidence outcome $\vec{S}=(1,1)$), namely  
\begin{align}
P_{(0,2)}+ P_{(2,0)}=1-P_{(1,1)}=  \frac{1+\tr{\rho_1 \rho_2}}{2} \label{eq:P_bunch_HOM}.
\end{align}

While suppression laws have been formulated for Fourier \cite{tichy2010zero} and Sylvester \cite{crespi15, viggianiello18} interferometers, a detailed understanding of what information can be extracted when a suppression law is violated has been missing. In this context, our main contributions are twofold. First, we generalize suppression laws for Fourier interferometers involving  different multimode many-photon states $\rho_j$. This goes beyond previous works that considered only Fock-state inputs \cite{tichy2010zero, dittel2018totally}.  In addition,  we explain how violations of suppression laws allow the estimation of multivariate traces of the input states.   

Before we present our results, it is useful to define the function
\begin{equation}\label{eq:f_of_S}
f(\vec{S})= \sum_{j=0}^{M-1} j S_j ~~(\text{mod}~M ), 
\end{equation}
which is known to give rise to a zero-transmission law: it was proven in \cite{tichy2010zero, dittel2018totally} that when sending $M$ indistinguishable single photons in a Fourier interferometer, any outcome $\vec{S}$ such that $f(\vec{S})\neq 0$ is suppressed. Moreover, we define the probabilities 
\begin{equation}\label{eq:P_j}
    P_j = \underset{\vec{S}\sim D_{\vec{S}}}{\mathrm{Prob}}[f(\vec{S})=j], \quad j\in \{0,..., M-1\}
\end{equation}
and the expectation values of the unitary operator $\hat{C}^k$ on the input state of a Fourier interferometer
\begin{equation}\label{eq:X_k}
    X_k = \tr{\hat{C}^k \Omega}, \quad k\in \{0,..., M-1\}. 
\end{equation}
Our main result is the following (proof in Appendix~\ref{sec: Proofs}). 
\begin{theo}\label{theo:violation_suplaws}
    The probabilities $P_j = \mathbb{E}[\delta_{f(\vec{S}),j}]$ are related to the expectation values $X_k$ via a discrete Fourier transform, namely
    \begin{equation}
        P_j = \frac{1}{M} \sum_{k=0}^{M-1}\omega^{-jk} X_k, \label{eq:Pj_vs_Xk}
    \end{equation}
with $\omega=\exp(2\pi i/M)$. 
\end{theo}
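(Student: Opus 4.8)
The plan is to reduce the statement to the single operator identity $X_k=\mathbb{E}_{\vec S\sim D_{\vec S}}\!\left[\omega^{k f(\vec S)}\right]$ for each $k$, and then invert it by a discrete Fourier transform over $\mathbb{Z}_M$. First I would rewrite the indicator $\delta_{f(\vec S),j}$ using orthogonality of the characters of $\mathbb{Z}_M$,
\begin{equation}
  \delta_{f(\vec S),j}=\frac{1}{M}\sum_{k=0}^{M-1}\omega^{k\left(f(\vec S)-j\right)},
\end{equation}
which is exact because $f(\vec S)$ and $j$ both lie in $\{0,\dots,M-1\}$. Substituting this into $P_j=\mathbb{E}[\delta_{f(\vec S),j}]$ and exchanging the finite $k$-sum with the expectation gives $P_j=\frac{1}{M}\sum_{k=0}^{M-1}\omega^{-jk}\,\mathbb{E}[\omega^{k f(\vec S)}]$, so that \eqref{eq:Pj_vs_Xk} follows immediately once the identity $\mathbb{E}[\omega^{k f(\vec S)}]=X_k$ is established.

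For that identity I would make the photon-counting distribution explicit through the Born rule, $D_{\vec S}=\tr{\Pi_{\vec S}\,\Omega_\text{out}}$, where $\Omega_\text{out}=\hat F^\dagger\Omega\hat F$ is the state at the output of the (inverse) Fourier interferometer as in \eqref{eq:output_Fourier}, and $\Pi_{\vec S}$ is the projector onto the joint eigenspace of the coarse-grained number operators $\hat N_j=\sum_\alpha\hat n_{j,\alpha}$ with eigenvalues $S_j$. The central computation is the action of $\hat D^k$ on such an eigenspace: for any Fock state $\ket{\vec n}$ whose occupations satisfy $\sum_\alpha n_{j,\alpha}=S_j$, the definition $\hat D\hat a^\dagger_{j,\alpha}\hat D^\dagger=\omega^j\hat a^\dagger_{j,\alpha}$ gives $\hat D^k\ket{\vec n}=\prod_j\omega^{jk S_j}\ket{\vec n}$, and since $\omega^M=1$ this scalar equals $\omega^{k f(\vec S)}$ — crucially, it depends only on the coarse-grained pattern $\vec S$, not on how the photons are distributed among the internal modes $\alpha$. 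Hence $\hat D^k=\sum_{\vec S}\omega^{k f(\vec S)}\,\Pi_{\vec S}$, and tracing against $\Omega_\text{out}$ yields $\tr{\hat D^k\Omega_\text{out}}=\sum_{\vec S}\omega^{k f(\vec S)}D_{\vec S}=\mathbb{E}[\omega^{k f(\vec S)}]$. Finally, combining $\hat C=\hat F\hat D\hat F^\dagger$ with cyclicity of the trace, $\tr{\hat D^k\Omega_\text{out}}=\tr{\hat F\hat D^k\hat F^\dagger\,\Omega}=\tr{\hat C^k\,\Omega}=X_k$, which closes the chain of equalities.

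The step that deserves real care — and the only genuine obstacle — is the claim that $\hat D^k$ acts as a single scalar on each coarse-grained block $\Pi_{\vec S}$ and that this scalar is a function of $\vec S$ alone. This is precisely where the structure of the Fourier diagonal $\hat D=\prod_j\exp\!\left(i\frac{2\pi j}{M}\hat N_j\right)$ matters: the phase $\omega^j$ is applied uniformly to every internal mode $\alpha$ of system $j$, so the total phase collapses to $\prod_j\omega^{j S_j}$ and is blind to the internal d.o.f. — the same feature that makes the coarse-grained photo-counting measurement well defined and mirrors the coarse-grained HOM picture of Eq.~\eqref{eq:P_bunch_HOM}. Everything else is the standard $\mathbb{Z}_M$ character/Fourier algebra together with the elementary fact that $D_{\vec S}$ is, by definition, the Born-rule distribution of the commuting observables $\{\hat N_j\}$ on $\Omega_\text{out}$.
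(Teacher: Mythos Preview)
Your proof is correct and follows essentially the same route as the paper's: both hinge on the fact that $\hat D^k$ is diagonal in the Fock basis with eigenvalue $\omega^{k f(\vec S)}$ depending only on the coarse-grained pattern, and then use the $\mathbb{Z}_M$ character sum to pass between $P_j$ and $X_k$. Your version is organized in the reverse order (starting from $P_j$ and inserting the character expansion of $\delta_{f(\vec S),j}$, whereas the paper starts from the $X_k$-sum and collapses it to $P_j$), and you make the coarse-graining step --- why the phase is blind to the internal modes $\alpha$ --- more explicit, but the argument is the same.
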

This result can be understood by re-expressing the expectation values as $X_k = \mathbb{E}[ \omega^{k f(\vec{S})}]$.
Note that $f(\vec{S})$ can be seen as a way to bin the exponentially large space of possible outcomes $\vec{S}$ into $M$ possible bins, each with aggregate probability $P_j$. Hence, by estimating these probabilities 
from experimental samples, the values of $X_k$ can be estimated by an inverse discrete Fourier transform of these values. Note that $X_1=\Delta_{1...M} $ is the multivariate trace of the $M$ input states. 
In addition, this method allows us to obtain extra relational information about the input states via the other values of $X_k$ (except for $X_0$ which is always equal to $1$). To give an example for four states, we have that $X_2= \tr{\rho_1\rho_3}\tr{\rho_2\rho_4}$, while $X_3=X_1^*$. 

The generalized suppression laws for Fourier interference of multimode, multiphoton states can now be seen as a corollary of Theorem~\ref{theo:violation_suplaws}. By defining the projector onto the cyclic-symmetric subspace \begin{equation}
    \hat{\Pi}_C=\frac{1}{M}\sum_{k=0}^{M-1}\hat{C}^k, 
\end{equation}
it can be seen that $P_0= \tr{\hat{\Pi}_C \Omega}$, i.e. $P_0$ quantifies the component of the input state that is invariant under cyclic permutations. Hence, we have the following corollary, proved in the Appendix. \begin{cor}\label{cor:sup_laws}
    The input state $\Omega$ of the Fourier interferometry process is invariant under the cyclic permutation, i.e. $\hat{C}\Omega=\Omega$, if and only if $P_0=1$. Consequently, any outcome $\vec{S}$ such that $f(\vec{S}) \neq 0$ is forbidden. 
\end{cor}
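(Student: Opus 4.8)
The plan is to obtain Corollary~\ref{cor:sup_laws} as a direct consequence of Theorem~\ref{theo:violation_suplaws}, the positivity of the density operator $\Omega$, and the fact that $\hat{\Pi}_C$ is an orthogonal projector. First I would record the algebraic properties of $\hat{\Pi}_C=\frac{1}{M}\sum_{k=0}^{M-1}\hat{C}^k$ that are needed: since $\hat{C}$ is unitary and $\hat{C}^M=\id$ (it fixes the vacuum and commutes with every $\hat{a}^\dagger_{j,\alpha}$), a reindexing of the finite sum gives $\hat{\Pi}_C^\dagger=\hat{\Pi}_C$ and $\hat{\Pi}_C^2=\hat{\Pi}_C$, so $\hat{\Pi}_C$ is an orthogonal projector, and similarly $\hat{C}\hat{\Pi}_C=\hat{\Pi}_C\hat{C}=\hat{\Pi}_C$. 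Setting $j=0$ in Eq.~\eqref{eq:Pj_vs_Xk} and inserting $X_k=\tr{\hat{C}^k\Omega}$ then reproduces the identity $P_0=\frac{1}{M}\sum_k\tr{\hat{C}^k\Omega}=\tr{\hat{\Pi}_C\Omega}$ already quoted before the statement.

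For the equivalence I would argue both directions. The easy direction: if $\hat{C}\Omega=\Omega$ then $\hat{C}^k\Omega=\Omega$ for all $k$, hence every $X_k=\tr{\Omega}=1$ and Eq.~\eqref{eq:Pj_vs_Xk} yields $P_0=\frac{1}{M}\sum_k 1=1$ (equivalently $P_0=\tr{\hat{\Pi}_C\Omega}=\tr{\Omega}=1$). Conversely, suppose $P_0=1$. Then $\tr{(\id-\hat{\Pi}_C)\Omega}=\tr{\Omega}-P_0=0$; since $\id-\hat{\Pi}_C\ge 0$ and $\Omega\ge0$, the standard fact that $\tr{AB}=0$ with $A,B\ge0$ forces $AB=0$ (write $\tr{AB}=\|B^{1/2}A^{1/2}\|_2^2$) gives $(\id-\hat{\Pi}_C)\Omega=0$, i.e. $\hat{\Pi}_C\Omega=\Omega$. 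Applying $\hat{C}$ and using $\hat{C}\hat{\Pi}_C=\hat{\Pi}_C$ we get $\hat{C}\Omega=\hat{C}\hat{\Pi}_C\Omega=\hat{\Pi}_C\Omega=\Omega$, as desired.

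The suppression statement then follows from the fact that the $P_j$ form a probability distribution over $j\in\{0,\dots,M-1\}$: $\sum_j P_j=1$ and $P_j\ge0$, so $P_0=1$ (equivalently $\hat{C}\Omega=\Omega$) forces $P_j=0$ for every $j\ne0$. Since by Eq.~\eqref{eq:P_j} each $P_j=\sum_{\vec{S}:\,f(\vec{S})=j}D_{\vec{S}}$ is a sum of non-negative terms, every individual $D_{\vec{S}}$ with $f(\vec{S})\ne0$ vanishes, i.e. such outcomes are forbidden.

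The only step requiring genuine care — and thus the main obstacle — is the implication $\tr{\hat{\Pi}_C\Omega}=1\Rightarrow\hat{\Pi}_C\Omega=\Omega$, which relies on $\hat{\Pi}_C$ being a bona fide orthogonal projector (Hermiticity alone would not suffice) together with the positivity argument above; it is also what prevents one from concluding cyclic invariance from weaker data, such as $|X_1|=1$. Everything else reduces to reindexing finite operator sums and to the defining properties of the distribution $D_{\vec{S}}$. The possible infinite-dimensionality of the Fock space $\mathcal{H}$ is harmless here, since $\hat{C}$ is unitary and all the operator sums that appear are finite.
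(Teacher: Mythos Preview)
Your proposal is correct and follows essentially the same route as the paper: both directions rest on the identity $P_0=\tr{\hat{\Pi}_C\Omega}$, the projector property $\hat{C}\hat{\Pi}_C=\hat{\Pi}_C$, and the fact that the $P_j$ form a probability distribution. Your reverse implication is in fact slightly more direct than the paper's (which detours through ``$P_j=0$ for $j\neq0$ hence $X_k=1$ for all $k$'' before arriving at $\tr{\hat{\Pi}_C\Omega}=1$), and you spell out the positivity argument $\tr{AB}=\|B^{1/2}A^{1/2}\|_2^2$ that the paper leaves implicit.
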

The result of \cite{tichy2010zero} is recovered when the systems $\rho_j$ are indistinguishable single photon states. Interestingly, the only assumption to derive Theorem~\ref{theo:violation_suplaws} and Corollary~\ref{cor:sup_laws} is that $\Omega$ is an $M$-party bosonic state belonging to the tensor product of $M$ identical Fock spaces. The state can thus be entangled, which opens 
the way to the study of suppression laws and symmetries of entangled states in future work.

In Fig.~\ref{fig:generalizedHOM} we depict a generalized HOM test ($M=2$) between states $\rho_1$ and $\rho_2$, which could both be described as a superposition of photon numbers in different internal d.o.f., such as spatial, frequency, or polarization. In this scenario, $\rho_1$ and $\rho_2$ interfere at a 50:50 beam splitter and $P_0$ is the probability to measure patterns $\vec{S}$ such that $f(\vec{S})=0$, that is, an even photon number in mode 1. Here, $X_0=1$ and $X_1=\tr{\rho_1\rho_2}$, so that we have $P_0=(1+\tr{\rho_1\rho_2})/2$. Note that Eq.~\eqref{eq:P_bunch_HOM} is recovered when $\rho_1$ and $\rho_2$ are single-photon states, 
since $f(\vec{S})=0$ for bunching events. Thus, our generalized HOM test provides a full optical implementation of the SWAP test compared to \cite{HOM_swap}, as we may also access the overlap between two arbitrary many-photon states (see also \cite{multicopy-cerf-3}). Crucially, our protocol extends beyond pairwise overlaps
and allows us to measure multivariate traces of photonic quantum states, possibly living in an exponentially large Hilbert space. In what follows, we list some selected applications of this protocol.





\begin{figure}[t]
    \centering
    \includegraphics[width=1\linewidth]{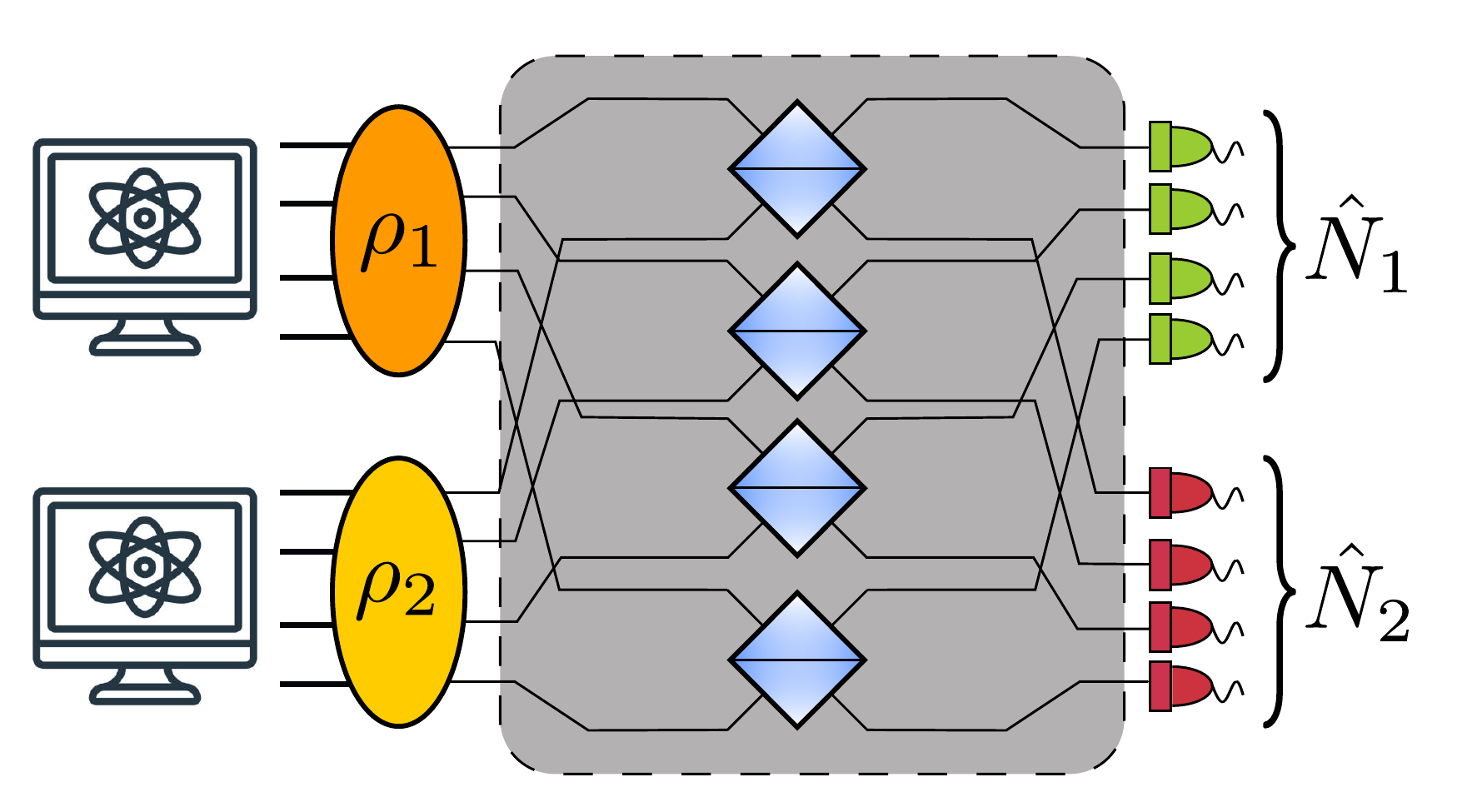}
    \caption{Generalized HOM test for estimation of overlaps of generic input states $\{\rho_1,\rho_2\}$ involving multiple photons occupying several spatial modes, as well as further internal d.o.f. The $i$th spatial mode of $\rho_1$ interferes via a  50:50 beam splitter with the $i$th spatial mode of $\rho_2$. We assume the beam-splitter leaves invariant other internal modes (e.g. frequency, polarization). Total mode occupations $S_j$ are measured by photon counting and outcomes are binned according to $f(\vec{S})$ to estimate probabilities $P_j$ (see Eqs.~\eqref{eq:P_j}), for an estimation of overlap $\tr{\rho_1 \rho_2}$. A generalized HOM suppression is observed if $\rho_1$ and $\rho_2$ are identical pure states for outcomes $\vec{S}$ such that $f(\vec{S})=1$.}
\label{fig:generalizedHOM}
\end{figure}

\paragraph{Quantum kernel methods -- } A common task in machine learning is the supervised learning task, where an algorithm consumes data-label pairs $\{\boldsymbol{x},y\}\in \mathcal{X}\times \{0,1\}$ and outputs a function that ideally classifies following test data. A popular algorithm for this task is the Support Vector Machine (SVM) which is trained on inner products $\langle\boldsymbol{x}_i, \boldsymbol{x}_j \rangle$ in the input space to find a robust linear classification boundary that best separates the data.  Quantum kernel methods (QKMs) show promise for accelerating data analysis by efficiently learning relationships between input data points that have been encoded into an exponentially large Hilbert space~\cite{petersMachineLearningHigh2021}, 
potentially resulting in a kernel function that is expressive and challenging to compute classically \cite{ganUnifiedFrameworkTraceinduced2023,schnabelQuantumKernelMethods2025,schuldQuantumMachineLearning2019,thanasilpExponentialConcentrationQuantum2024}. A usual approach is to define a kernel of the type \cite{anaiContinuousvariableQuantumKernel2024,hochQuantumMachineLearning2025,yinExperimentalQuantumenhancedKernelbased2025,chabaudQuantumMachineLearning2021}
\begin{equation}
    \mathcal{K}(\boldsymbol{x}_i, \boldsymbol{x}_j)=|\langle \phi(\boldsymbol{x}_{i})|\phi(\boldsymbol{x}_{j}) \rangle|^{2} 
\end{equation}
and a classifier function
\begin{equation}    f(\boldsymbol{x})=\operatorname{sign}{\bigg (}\sum_{i}a_{i}y_{i}\mathcal{K}(\boldsymbol{x}_{i},\boldsymbol{x}) + b{\bigg )} \, .
\end{equation}
The usual encoding in quantum computers can be performed by applying a unitary, parametrized by the values $\boldsymbol{x}_{i}$ on a initial state $|\phi(\boldsymbol{x}_{i})\rangle=U(\boldsymbol{x}_{i})|\Psi\rangle$, and as a consequence measuring $\mathcal{K}(\boldsymbol{x}_{i},\boldsymbol{x}_{j})$ reduces to measuring $|\langle \Psi|U^{\dagger}(\boldsymbol{x}_{j})U(\boldsymbol{x}_{i})|\Psi\rangle|^{2}$. Notice that the quantum kernel estimation be done in linear optics by the generalized HOM test, even if the state preparation itself can be done by a more complex procedure, such as adaptive boson sampling~\cite{chabaudQuantumMachineLearning2021} of even by a universal photonic quantum computer. This approach can drastically improve current methods especially for states obtained via postselection or adaptive schemes~\cite{chabaudQuantumMachineLearning2021,hochQuantumMachineLearning2025}.
\\

\paragraph{Eigenvalue estimation --} Using multiple copies of a generic bosonic state $\rho$, i.e. $\Omega=\rho^{\otimes M}$, it is possible to estimate certain spectral properties of the state. In particular by measuring the set $\{\tr{\rho^{j}} \forall j\in [2,n]\}$, it possible to use the Faddeev-LeVerrier algorithm~\cite{baerFaddeevLeVerrierAlgorithmPfaffian2021} to construct the characteristic polynomial, whose roots are the eigenvalues of $\rho$. This task is prone to noise especially for the cases in which $\operatorname{rank}(\rho)$ is large. Nevertheless the task of learning the largest eigenvalue of $\rho$, with high precision, has been shown \cite{wagner2024quantum} to be more resilient to noise and can be, in principle, performed directly from the knowledge of low-order traces. Similarly, it is possible to directly evaluate the Rényi entropy of any state
\begin{equation}
    H_{\alpha}(\rho)=\frac{1}{1-\alpha}\ln \tr{\rho^{\alpha}}
\end{equation}
for $\alpha\in \{n\in\mathbb{N}:n\geq 2\}$, as proposed in \cite{daley2012measuring}.

\paragraph{ Partial photonic distinguishability --} In the study of multi-particle interference, 
particles can often be partially distinguished due to differences in internal degrees of freedom such as polarization or time-of-arrival. This phenomenon strongly affects the outcome transition probabilities of linear interference experiments, like boson sampling \cite{aaronsonComputationalComplexityLinear2010, Tichy2015, shchesnovichPartialIndistinguishabilityTheory2015a}, with quantum computational advantage believed to be possible only in a regime of high indistinguishability \cite{hovenEfficientClassicalAlgorithm2025}. While suppression laws have been proposed as a tool to validate the input state for  boson sampling experiments with single photon input, our results 
show that this idea can be extended to Gaussian boson sampling, which needs indistinguishable single-mode squeezed input states \cite{Hamilton17, shi2022effect}. Moreover, with the new understanding that violations of suppression laws quantify departure from cyclic symmetry of the input state, it is possible to obtain precise bounds on the closeness of the input state to a state of identical particles, as well as quantify genuine multiphoton indistinguishability \cite{Brod_19, englbrechtIndistinguishabilityIdenticalBosons2024}. These results will be presented in a separate work \cite{Robbio2026}.  Very recently, an independent work by Sanz et al. \cite{sanz2026exponential} also proposes the use of Fourier interferometers for efficient benchmarking of genuine multiphoton indistinguishability. We stress that current alternative techniques for quantifying genuine multiphoton indistinguishability are inefficient, as they require postselection on an outcome with exponentially small probability \cite{Pont_22}.  

It is known that the outcomes of multiphoton interference experiments with single photon inputs depend only on multivariate traces of the input states 
\cite{shchesnovichPartialIndistinguishabilityTheory2015a, shchesnovich2018collective, Oszmaniec2024}. Our tools provide a systematic way to fully reconstruct this relational information. In fact, for pure state single-photon inputs, with internal d.o.f. described by wavefunctions $|\phi_j\rangle$, this relational information is fully described by a Gram matrix $\langle \phi_j|\phi_k\rangle$. It has been shown that the all physically relevant elements of the Gram matrix, can be characterized via the state overlaps $|\mathcal{S}_{jk}|^2= |\langle \phi_j|\phi_k\rangle|^2$, measurable from HOM tests, as well as at most $O(n^2)$ collective photonic phases \cite{Oszmaniec2024}. These collective photonic phases are, in fact, the phases~\cite{Menssen_17, shchesnovich2018collective} of complex-valued multivariate traces that affect outcome probabilities. The efficient protocol we provide for measuring multivariate traces 
can be applied at most $O(M^2 )$ times in order to reconstruct all physically relevant collective photonic phases in arbitrary partial distinguishability scenarios featuring pure states \cite{Oszmaniec2024}. In addition, collective photonic phases are of interest due to their relation to the more general concept of geometric phases \cite{pancharatnam1956generalized, Berry09}, and have been shown to lead to counterintuitive phenomena in bosonic interference \cite{Menssen_17, Menssen_22, rodari_24_counter}, playing a role in practical protocols such as indistinguishability distillation \cite{hoch2025optimal}.


\paragraph{Discussion --} 

Multiple other applications of our scheme for measuring multivariate traces may be anticipated, as physically relevant quantities can often be represented as simple functions of low-order Bargmann invariants. For example, in the Appendix we discuss how our protocol gives direct measurement schemes for the Wigner function, the Husimi-Q distribution,  Kirkwood-Dirac quasi-probabilities \cite{arvidsson-shukurPropertiesApplicationsKirkwoodDirac2024}, and the P-representation, all representable by invariants of at most third-order. Weak values \cite{wagner2023simple}, useful in quantum metrology, are also functions of 3rd-order invariants, and  out-of-time-order correlators (OTOCs) \cite{yunger18}, useful for characterizing information scrambling, are given by 5th-order invariants \cite{wagner2024quantum}. 

In conclusion, we have presented a native linear-optical protocol that uses  Fourier interferometry, photon-counting and simple classical postprocessing to directly estimates multivariate traces (Bargmann invariants) of general bosonic input states. We have also generalized suppression laws in Fourier interferometry to multiphoton, multimode states and shown how violations of suppression laws quantify departure from cyclic symmetry and can be predicted via the relational information of these input states. Our results provide a strong equivalence between the computational power of the cycle test in the circuit model and that of Fourier interferometry in linear-optics, showing that a useful primitive in quantum computing can be done in photonic system purely with linear optics, avoiding costly photonic entangling gates. This provides a significant generalization of the well-known equivalence between SWAP and HOM tests, which had been discussed mostly for single-photon states, opening up several applications which can be explored in near-term photonic quantum computing.

\paragraph{Acknowledgments --}
N.J.C. thanks M. Arnhem, S. De Bièvre and C. Griffet for insightful discussions that inspired this work. L.N. and E.F.G. acknowledge support from FCT-Fundação para a Ciência e a Tecnologia (Portugal) via Project No. CEECINST/00062/2018. L.N. acknowledges funding from the European Union’s Horizon Europe Framework Programme (EIC Pathfinder Challenge project Veriqub) under Grant Agreement No. 101114899. M.R. is a FRIA grantee of the Fonds de la Recherche Scientifique – FNRS. E.F.G.  acknowledges funding from the National Council for Scientific and Technological Development – CNPq (Brazil) under grant 308292/2025-1. N.J.C. acknowledges support from the Fonds de la Recherche Scientifique – FNRS under project CHEQS within the Excellence of Science (EOS) program.

\bibliography{biblio}

\section{Appendix }
\paragraph{Quasiprobability distributions -} Quasiprobability distributions provide a phase-space representation of quantum states that resembles classical probability theory while accommodating intrinsically quantum features. Unlike true probabilities, they may take negative or even complex values, with these deviations signaling various notions of nonclassicality. This framework offers an intuitive viewpoint on quantum phenomena and is widely used in the analysis of quantum systems \cite{cahillOrderedExpansionsBoson1969,ferrieQuasiprobabilityRepresentationsQuantum2011,hilleryDISTRIBUTIONFUNCTIONSPHYSICS}. Here, we show how our protocol can be used to directly measure the point-wise value of various quasiprobability representations: the Wigner function, the Husimi-Q distribution \cite{curilefHusimiDistributionDevelopment2013}, the positive P representation, and the Kirkwood-Dirac quasiprobability distribution.
The Husimi Q distribution of state $\rho$ is given by 
\begin{equation}
    Q(\alpha)=\frac{1}{\pi}\langle \alpha|\rho|\alpha\rangle=\frac{1}{\pi}\tr{\rho |\alpha\rangle\langle \alpha|},
\end{equation}
It is well known that the Husimi-Q function can be reconstructed by heterodyne measurements. Instead, we note that $Q(\alpha)$ is an overlap, i.e. a second-order Bargmann invariant, which can be estimated using our scheme via the interference of $\rho$ and coherent state $|\alpha\rangle$ at a beam splitter. The expectation value of the photon number parity on output mode $\hat{a}_1$ then gives us an estimate of the overlap, using Eqs. (\ref{eq:P_j}) and (\ref{eq:Pj_vs_Xk}), see also \cite{multicopy-cerf-3}. The same overlap estimation scheme gives a direct measurement also for the Wigner function 
\begin{align}
    W(\alpha) &=\frac{2}{\pi}\tr{\rho D(\alpha)e^{i\pi \hat{n}}D(\alpha)^{\dagger}}\\
    &= \frac{2}{\pi}\tr{\rho D(\alpha)(\sum_{n=0}^{\infty}(-1)^{n}|n\rangle \langle n|)D(\alpha)^{\dagger}}\\
    &= \frac{2}{\pi}\sum_{n=0}^{\infty}(-1)^{n}\tr{\rho D(\alpha)|n\rangle \langle n|D(\alpha)^{\dagger}}  \ \ .
\end{align}
So the Wigner function can be directly measured by collecting the overlap values of the state we are interested in and Fock displaced states $ D(\alpha)|n\rangle \langle n|D(\alpha)^{\dagger}$. Although not optimal, it shows a further connection between Bargmann invariants and quasi-probability distributions.
For an alternative direct measurement scheme for the Wigner function, based on a quantum nondemolition measurement of the displaced parity operator, see \cite{lutterbach97}.

Other quasiprobability distributions can be measured with our scheme in a similar way. For example, the positive P-representation distribution \cite{carmichael2008statistical}, defined as
\begin{equation}
    P(\alpha,\beta)=\frac{\langle \beta|\alpha\rangle \langle \alpha|\rho|\beta\rangle}{\pi^{2}}=\frac{\tr{|\alpha\rangle \langle \alpha|\rho|\beta\rangle\langle \beta|}}{\pi^{2}},
\end{equation}
is given by a third-order Bargmann invariant which is thus measurable with our protocol. Another quasiprobability function of interest is the Kirkwood-Dirac distribution \cite{kirkwoodQuantumStatisticsAlmost1933,diracAnalogyClassicalQuantum1945} defined as
\begin{equation}
    Q_{i,j}(\rho)=\langle b_{j}|a_{i}\rangle\langle a_{i}|\rho|b_{j}\rangle=\tr{|a_{i}\rangle\langle a_{i}|\rho|b_{j}\rangle\langle b_{j}|}
\end{equation}
where $\{|a_{i}\rangle\}$ and $\{|b_{j}\rangle\}$ are two orthonormal bases - this is also clearly a third-order Bargmann invariant. The KD distribution can be extended to include a larger number of bases, or in general to projection-valued measures (PVMs), representable using functions of higher-order Bargmann invariants.

\subsection{Proofs}\label{sec: Proofs} 
Here we present the necessary lemma used in Eq.~\eqref{eq: multivariate_cycle}. Here, we assume the internal d.o.f. $\alpha$ are discrete with $\alpha\in \{1, ...,d\}$ but the result can be extended for the continuous case.
\begin{lem}
    $\tr{\hat{C}\otimes_{i=1}^{M}\rho_{i}}=\tr{\rho_{1}\dots \rho_{M}}$.
\end{lem}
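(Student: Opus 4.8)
The plan is to reduce the left-hand side to the multivariate trace by expanding in a Fock basis and collapsing completeness relations, i.e.\ to run, for general $M$, the familiar ``swap trick'' that for $M=2$ gives $\tr{\hat{W}\,(\rho_1\otimes\rho_2)}=\tr{\rho_1\rho_2}$ with $\hat W$ the swap operator.

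First I would pin down how $\hat C$ acts on Fock states. Taking adjoints of $\hat C\,\ad_{j,\alpha}\,\hat C^{\dagger}=\ad_{j+1,\alpha}$ yields the same relation for the annihilation operators, so $\ha_{j+1,\alpha}\,\hat C|0\rangle=\hat C\,\ha_{j,\alpha}|0\rangle=0$ for every $j,\alpha$; hence $\hat C|0\rangle$ is annihilated by all annihilation operators, so $\hat C|0\rangle\propto|0\rangle$, and fixing the global phase (the standard linear-optical convention) gives $\hat C|0\rangle=|0\rangle$. Writing a product Fock basis vector of $\mathcal{H}^{\otimes M}$ as $|\vec n^{(1)}\rangle\otimes\cdots\otimes|\vec n^{(M)}\rangle=\prod_{j=1}^{M}\prod_{\alpha}\frac{(\ad_{j,\alpha})^{\,n^{(j)}_{\alpha}}}{\sqrt{n^{(j)}_{\alpha}!}}\,|0\rangle$ and conjugating each creation operator through $\hat C$, it follows that $\hat C$ rigidly transfers the content of each tensor slot to the next one while leaving the internal labels $\alpha$ untouched: $\hat C\bigl(|\vec n^{(1)}\rangle\otimes\cdots\otimes|\vec n^{(M)}\rangle\bigr)=|\vec n^{(M)}\rangle\otimes|\vec n^{(1)}\rangle\otimes\cdots\otimes|\vec n^{(M-1)}\rangle$ (equivalently $\hat C^{\dagger}$ shifts the other way, which is the form I would use when letting $\hat C$ act on the bra inside the trace).

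With this, the computation is mechanical. Expanding $\tr{\hat C\,\rho_1\otimes\cdots\otimes\rho_M}=\sum\langle\vec n^{(1)},\dots,\vec n^{(M)}|\,\hat C\,(\rho_1\otimes\cdots\otimes\rho_M)\,|\vec n^{(1)},\dots,\vec n^{(M)}\rangle$ over the product Fock basis and moving $\hat C$ onto the bra turns each summand into a chain of single-system matrix elements $\prod_{j=1}^{M}\langle\vec n^{(j+1)}|\rho_j|\vec n^{(j)}\rangle$, the slot index read mod $M$. Carrying out the sums over $\vec n^{(1)},\vec n^{(2)},\dots$ one at a time — each being a resolution of the identity on a single copy of $\mathcal{H}$ — fuses consecutive density operators, and after $M-1$ such contractions one is left with a single-copy trace equal to $\tr{\rho_1\rho_2\cdots\rho_M}$; the cyclic orientation of this product is the one fixed by the direction chosen for $\hat C$, and the opposite choice merely complex-conjugates the result, which is immaterial since every $\rho_j$ is Hermitian.

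I do not expect a genuine obstacle here — it is essentially bookkeeping — but three points require care: (i) justifying $\hat C|0\rangle=|0\rangle$ and that the action of $\hat C$ on arbitrary (superposed, multi-photon, even entangled) vectors is inherited from its action on the $\ad_{j,\alpha}$; (ii) tracking the cyclic slot index so that the order of the operator product at the end is correct; and (iii) observing that, although each $\mathcal{H}$ is infinite-dimensional, the series converge absolutely because the $\rho_j$ are trace-class, so the rearrangements and interchanges of summation are legitimate. The continuous-$\alpha$ case goes through verbatim upon replacing the mode sums by integrals and the completeness relations $\sum_{\vec n}|\vec n\rangle\langle\vec n|$ by their continuous counterparts.
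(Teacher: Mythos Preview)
Your argument is correct and complete: you establish $\hat C|0\rangle=|0\rangle$, deduce the rigid cyclic action on product Fock states, expand the trace in that basis, and collapse the completeness relations into the operator chain $\rho_1\rho_2\cdots\rho_M$. This is a genuinely different route from the paper's. The paper works in the pure-state sector, writing each $|\phi_i\rangle=g_i(\boldsymbol a_i^\dagger)|vac\rangle$, pushes $\hat C$ through the creation operators to get $g_i(\boldsymbol a_{i+1}^\dagger)$, factorises over external modes into a product $\prod_i\langle\phi_i|\phi_{i\pm1}\rangle$, and then invokes linearity for mixed states. Your approach handles mixed states directly (no purification step), makes the contraction mechanism transparent as the standard swap/cycle-trick familiar from the circuit model, and is explicit about trace-class convergence in infinite dimension; the paper's approach is more compact and stays closer to the second-quantised language, highlighting that the only input is the mode-shifting action on creation operators. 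One caveat: your closing remark that the cyclic orientation is ``immaterial since every $\rho_j$ is Hermitian'' is misleading --- Hermiticity only gives $\tr{\rho_M\cdots\rho_1}=\overline{\tr{\rho_1\cdots\rho_M}}$, and the multivariate trace is generically complex (its phase \emph{is} the Bargmann invariant). The orientation is fixed by the stated action of $\hat C$ and must be tracked; you do get the right answer because you do track it, but drop the ``immaterial'' disclaimer.
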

\begin{proof}
    Let us start from pure states written in the form:
\begin{equation*}
    |\phi_{i}\rangle\langle \phi_{i}|=g_{i}(a_{i,1}^{\dagger},\dots,a_{i,d}^{\dagger})|vac\rangle\langle vac|g_{i}(a_{i,1}^{\dagger},\dots,a_{i,d}^{\dagger})
\end{equation*}
where $g_{i}$ are functions of the creation operators, such that $|\phi_{i}\rangle\langle \phi_{i}|$ is normalized. We will use the notation $g_{i}(a_{i,1}^{\dagger},\dots,a_{i,d}^{\dagger})=g_{i}(\boldsymbol{a}_{i}^{\dagger})$, to make the notation more compact. Then we can rewrite the original trace as:
\begin{gather*}
    \operatorname{Tr}(\hat{C}\otimes_{i=1}^{M}|\phi_{i}\rangle\langle \phi_{i}|) =\\
    =\langle vac|{\big (}\prod_{j}g_{j}(\boldsymbol{a}_{i}^{\dagger}){\big )}^{\dagger}\hat{C}{\big (}\prod_{i}g_{i}(\boldsymbol{a}_{i}^{\dagger}){\big )}|vac\rangle\\
    =\langle vac|{\big (}\prod_{j}g_{j}(\boldsymbol{a}_{j}^{\dagger}){\big )}^{\dagger}\hat{C}{\big (}\prod_{i}g_{i}(\boldsymbol{a}_{i}^{\dagger})\hat{C}^{\dagger}C|vac\rangle\\
    =\langle vac|{\big (}\prod_{j}g_{j}(\boldsymbol{a}_{j}^{\dagger}){\big )}^{\dagger}{\big (}\prod_{i}g_{i}(\boldsymbol{a}_{i+1}^{\dagger}){\big )}|vac\rangle\\
    =\prod_{i}\langle vac|g_{i}^{\dagger}(\boldsymbol{a}_{i}^{\dagger})g_{i+1 \operatorname{mod} M}(\boldsymbol{a}_{i}^{\dagger})|vac\rangle\\
    =\prod_{i}\langle \phi_{i}|\phi_{i+1 \operatorname{mod} M}\rangle
    =\operatorname{Tr}(|\phi_{1}\rangle\langle \phi_{1}|...|\phi_{M}\rangle\langle \phi_{M}|)
\end{gather*}
where we reduce the double product to a single one, due to the fact that states in two different external modes are orthogonal. For mixed states, we can notice that through the linearity of the trace, the same argument holds, implying 
\begin{equation*}
    \tr{\hat{C}\otimes_{i=1}^{M} \rho_{i}}=\tr{\rho_{1}\dots\rho_{M}}
\end{equation*}
\end{proof}
To clarify the notation of the proof, let us consider some examples:
\begin{itemize}
    \item for the case of single photons in some internal degree of freedom a common factorization\cite{tichyInterferenceIdenticalParticles2014} can be written as $g_{i}(\boldsymbol{a_{i}}^{\dagger})=a_{i,\phi_{i}}^{\dagger}=\sum_{\alpha}c_{i,\alpha}a_{i,\alpha}^{\dagger}$;
    \item for coherent states we have that $|\boldsymbol{\beta}\rangle=D(\boldsymbol{\beta})|vac\rangle$, where $\boldsymbol{\beta}\in \mathbb{C}^{d}$, in such a case $g_{i}(\boldsymbol{a}_{i}^{\dagger})=e^{-\frac{|\boldsymbol{\beta}|^{2}}{2}}e^{\sum_{\alpha}\beta_{\alpha} a_{i,\alpha}^{\dagger}}$.
\end{itemize}
This approach generalizes the known results for single photon states.\\

Theorem \ref{theo:violation_suplaws} provides a useful tool to understand the impact of the multivariate traces $\{X_{k}\}$ on the suppression laws.
\begin{theorem}
    The probabilities $P_j$ are related to the expectation values $X_k$ via a discrete Fourier transform via the equations
    \begin{equation}\label{eq:PjvsXk}
        P_j = \frac{1}{M} \sum_{k=0}^{M-1}e^{-i \frac{2\pi j k}{M} } X_k 
    \end{equation}
\end{theorem}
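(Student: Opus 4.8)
The plan is to reduce everything to the single intermediate identity $X_k = \mathbb{E}_{\vec S\sim D_{\vec S}}\!\big[\omega^{k f(\vec S)}\big]$ and then invert a discrete Fourier transform. First I would invoke the diagonalization $\hat C = \hat F \hat D \hat F^\dagger$ from Eq.~\eqref{eq: Fourier evolution}, which gives $\hat C^{k} = \hat F \hat D^{k} \hat F^\dagger$, so that by cyclicity of the trace
\[
X_k = \tr{\hat C^{k}\,\Omega} = \tr{\hat D^{k}\,\hat F^\dagger \Omega \hat F} = \tr{\hat D^{k}\,\Omega_{\text{out}}},
\]
where $\Omega_{\text{out}} = \hat F^\dagger \Omega \hat F$ is precisely the state on which the photon-counting measurement is carried out, with outcome distribution $D_{\vec S}$. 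Note that this step uses nothing about the internal structure of $\Omega$, so it holds for an arbitrary (possibly entangled) $M$-party bosonic state in $\mathcal{H}^{\otimes M}$.

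Second, I would show that $\hat D^{k}$ is diagonal in the joint eigenbasis of the measured number observables $\hat N_j$ and acts there exactly by the phase $\omega^{k f(\vec S)}$. From $\hat D = \prod_{j} \exp\!\big(i\tfrac{2\pi j}{M}\hat N_j\big)$ one gets $\hat D^{k} = \exp\!\big(i\tfrac{2\pi k}{M}\sum_{j} j\,\hat N_j\big)$, hence on the sector where $\hat N_j = S_j$ for all $j$ it multiplies by $\exp\!\big(i\tfrac{2\pi k}{M}\sum_j j S_j\big) = \omega^{k f(\vec S)}$; here the reduction modulo $M$ in the definition~\eqref{eq:f_of_S} of $f$ is harmless because changing $\sum_j jS_j$ by a multiple of $M$ changes the exponent by an integer multiple of $2\pi k$. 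Writing $\hat\Pi_{\vec S}$ for the projector onto that sector, the $\hat\Pi_{\vec S}$ resolve the identity on the Fock space and commute with $\hat D^{k}$, so $\hat D^{k} = \sum_{\vec S}\omega^{k f(\vec S)}\hat\Pi_{\vec S}$; tracing against $\Omega_{\text{out}}$ and using $D_{\vec S} = \tr{\hat\Pi_{\vec S}\Omega_{\text{out}}}$ gives $X_k = \sum_{\vec S}\omega^{k f(\vec S)}D_{\vec S}$, i.e.\ the desired identity.

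Finally I would bin the outcomes by the value of $f(\vec S)\in\{0,\dots,M-1\}$: grouping the sum above yields $X_k = \sum_{l=0}^{M-1}\omega^{kl}P_l$ with $P_l$ as in Eq.~\eqref{eq:P_j}, which exhibits $(X_k)_k$ as the discrete Fourier transform of $(P_l)_l$. Inverting with the orthogonality relation $\tfrac{1}{M}\sum_{k=0}^{M-1}\omega^{k(l-j)} = \delta_{l,j}$ then gives $P_j = \tfrac{1}{M}\sum_{k=0}^{M-1}\omega^{-jk}X_k$, which is Eq.~\eqref{eq:PjvsXk}. I expect the only genuinely delicate point to be the second step: one must check that the coarse-grained counters $\hat N_j = \sum_\alpha \hat n_{j,\alpha}$, which ignore the internal d.o.f.\ $\alpha$, are exactly the operators appearing in $\hat D$, so that $\hat D^{k}$ and the measurement share the eigenbasis labelled by $\vec S$, and that $f$ is well defined as an element of $\mathbb{Z}_M$ precisely because of the $\omega^M = 1$ periodicity just invoked.
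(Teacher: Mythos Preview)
Your proof is correct and follows essentially the same approach as the paper: both arguments pass from $X_k=\tr{\hat C^k\Omega}$ to $\tr{\hat D^k\Omega_{\text{out}}}$, use that $\hat D^k$ is diagonal in the coarse-grained number basis with eigenvalue $\omega^{k f(\vec S)}$, and then invoke the orthogonality relation $\tfrac{1}{M}\sum_k\omega^{k(l-j)}=\delta_{l,j}$. The only cosmetic difference is that you first establish $X_k=\sum_l\omega^{kl}P_l$ and then invert, whereas the paper directly evaluates the right-hand side of Eq.~\eqref{eq:PjvsXk} and collapses the $k$-sum into a Kronecker delta; the content is identical, and your explicit remark about the $\omega^M=1$ periodicity handling the $\bmod\, M$ in $f$ mirrors the paper's closing comment.
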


\begin{proof}
    \begin{gather*}
        \frac{1}{M} \sum_{k=0}^{M-1}e^{-i \frac{2\pi j k}{M} } X_k = \frac{1}{M} \sum_{k=0}^{M-1}e^{-i \frac{2\pi j k}{M} } \tr{ \hat{D}^{k}\Omega_{out}}=\\
    = \frac{1}{M} \sum_{k=0}^{M-1} \tr{ e^{i \frac{2\pi  k}{M}\left(\sum_{l=0}^{M-1}l\hat{N}_l - j\right)}\Omega_{out} }
    \end{gather*} 
    Since $e^{i \frac{2\pi  k}{M}\left(\sum_{l=0}^{M-1}l\hat{N}_l - j\right)}$ is a diagonal operator in the Fock basis, we can replace the above with
    \begin{gather*}
        \frac{1}{M} \sum_{k=0}^{M-1}e^{-i \frac{2\pi j k}{M} } X_k = \frac{1}{M}\sum_{k=0}^{M-1} \sum_{\vec{S}}p(\vec{S})  \, e^{i \frac{2\pi  k}{M}\left(\sum_{l=0}^{M-1}ls_l - j\right)}\\
        =\sum_{\vec{S}}p(\vec{S})\,\delta_{f(\vec{S}),j}=P_{j}
    \end{gather*}
    where we have used $f(\vec{S})= \sum_{l=0}^{n-1} l s_l \,\operatorname{mod} n$ and the definition of $P_j$. Lastly we notice that, thanks to the properties of the complex exponent, the $\operatorname{mod} n$ operation is native of this function.

Similarly, it can be seen the following
\begin{align}
   P_{j} = \sum_{\vec{S}\sim D_{\vec{S}}}p(\vec{S})\,\delta_{f(\vec{S}),j}= \mathbb{E}[\delta_{f(\vec{S}),j}]  \\
   X_{k} =\sum_{\vec{S}\sim D_{\vec{S}}}p(\vec{S})\,e^{\frac{2\pi i k}{n}f(\vec{S})}= \mathbb{E}[e^{\frac{2\pi i k}{M}f(\vec{S})}] 
\end{align}
and by property of the discrete Fourier Transform, the same can be derived.

\end{proof}

Corollary \ref{cor:sup_laws} characterizes the cyclic invariance in terms of measurement statistics.
\begin{corollary}
    The input state $\Omega$ of the Fourier interferometry process is invariant under the cyclic permutation, i.e. $\hat{C}\Omega=\Omega$, if and only if $P_0=1$. Consequently, any outcome $\vec{S}$ such that $f(\vec{S}) \neq 0$ is forbidden. 
\end{corollary}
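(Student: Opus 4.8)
The plan is to derive everything from Theorem~\ref{theo:violation_suplaws} together with the spectral structure of $\hat{C}$. First I would recall that $\hat{C}$ is unitary with $\hat{C}^M=\id$, so on the relevant Fock space it has eigenvalues among the $M$-th roots of unity $\{1,\omega,\dots,\omega^{M-1}\}$. This means the expectation values $X_k=\tr{\hat{C}^k\Omega}$ satisfy $|X_k|\le 1$ for all $k$, with $X_0=1$ always. The key observation is that $P_0$, being a probability, lies in $[0,1]$, and by the inverse relation (Theorem~\ref{theo:violation_suplaws}) one has $P_0=\tfrac{1}{M}\sum_{k=0}^{M-1}X_k=\tr{\hat{\Pi}_C\,\Omega}$, where $\hat{\Pi}_C=\tfrac{1}{M}\sum_k\hat{C}^k$ is precisely the orthogonal projector onto the $\hat C$-invariant ($+1$-eigenvalue) subspace (idempotent and Hermitian since $\hat C$ is unitary of order $M$).

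The forward direction is immediate: if $\hat{C}\Omega=\Omega$ then $\hat{C}^k\Omega=\Omega$ for every $k$, hence $X_k=\tr\Omega=1$ for all $k$, and therefore $P_0=\tfrac1M\sum_k 1=1$. For the converse, suppose $P_0=1$. Since $P_0=\tr{\hat{\Pi}_C\Omega}$ and $\Omega$ is a normalized state with $\tr\Omega=1$, we have $\tr{(\id-\hat{\Pi}_C)\Omega}=0$; because $\id-\hat{\Pi}_C$ is a positive (projection) operator and $\Omega\succeq 0$, this forces $(\id-\hat{\Pi}_C)\Omega=0$, i.e. $\Omega=\hat{\Pi}_C\Omega=\hat{\Pi}_C\Omega\hat{\Pi}_C$, so $\Omega$ is supported entirely on the cyclic-symmetric subspace and thus $\hat{C}\Omega=\Omega$. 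The final sentence of the corollary then follows from Theorem~\ref{theo:violation_suplaws}: cyclic invariance gives $X_k=1$ for all $k$, so $P_j=\tfrac1M\sum_k\omega^{-jk}=\delta_{j,0}$, meaning every outcome $\vec S$ with $f(\vec S)\ne 0$ has probability zero.

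I expect the only subtle point to be the converse step, where one needs that $\tr{A\Omega}=0$ with $A\succeq0$ and $\Omega\succeq0$ implies $A^{1/2}\Omega A^{1/2}=0$ and hence (for a projector $A$) $A\Omega=0$; this is standard ($\tr{A\Omega}=\tr{A^{1/2}\Omega A^{1/2}}$ is the trace of a positive operator, vanishing iff the operator itself vanishes). A minor care point is that although $\mathcal H$ is infinite-dimensional, $\hat C$ acts by permuting tensor factors and hence is a genuine unitary satisfying $\hat C^M=\id$ on all of $\mathcal H^{\otimes M}$, so the finite-order spectral argument and the boundedness $|X_k|\le1$ go through without truncation. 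Everything else is a direct substitution into the already-proven Fourier relation.
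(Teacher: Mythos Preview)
Your proof is correct and follows essentially the same route as the paper: both directions go through the identity $P_0=\tr{\hat{\Pi}_C\,\Omega}$ obtained from Theorem~\ref{theo:violation_suplaws}, and the converse is concluded from the projector-on-state argument that $\tr{\hat{\Pi}_C\Omega}=1$ forces $\hat{\Pi}_C\Omega=\Omega$, hence $\hat{C}\Omega=\Omega$. The only cosmetic difference is that the paper inserts the intermediate chain $P_0=1\Rightarrow P_{j\ne0}=0\Rightarrow X_k=1$ before invoking the projector, whereas you pass directly from $P_0=\tr{\hat\Pi_C\Omega}=1$ to the support condition; your extra justification of that step via $\tr{A^{1/2}\Omega A^{1/2}}=0$ is a welcome clarification.
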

\begin{proof}

    The forward implication is a straightforward application of Theorem~\ref{theo:violation_suplaws}. If $\hat{C}\Omega=\Omega$ it is clear that $X_k=1, \forall k \in \{0, .., M-1\}$. Eq.~\eqref{eq:Pj_vs_Xk} then implies that $P_0=1$ and $P_j=0, j\neq 1$. The reverse implication can be prove using the $\hat{\Pi}_{C}$. If $P_0=1$ then $P_j=0, \forall j\neq 1$, since $P_j$ is a probability distribution. This implies that $X_k=1~\forall k$ from Eq.~\eqref{eq:Pj_vs_Xk}. In turn, this implies that $\tr{\hat{\Pi}_{C}\Omega}=1$. Since $\hat{\Pi}_{C}$ is a projector and $\Omega$ is a quantum state, this is only possible if $\hat{\Pi}_{C}\Omega= \Omega$. Finally, since $\hat{C} \hat{\Pi}_{C} = \hat{\Pi}_{C} $, we have that $\hat{C}\Omega= \hat{C} \hat{\Pi}_{C} \Omega = \hat{\Pi}_{C} \Omega =\Omega$. Lastly, since $P_{0}=\mathbb{E}[\delta_{f(\vec{S}),0}]$, it is clear that all the events for $f(\vec{S}) \neq 0$ are suppressed if and only if $P_{0}=1$.
\end{proof}
This implies that $P_{0}$ certifies the cyclic symmetry of $\Omega$, and violation of the suppression laws directly relates to deviations from cyclic invariance of the input state.\\

To prove the sample complexity bound on the values $\tr{\rho_{1}\dots\rho_{M}}$, it is easier to use the relation between $P_{j}$ and $X_{k}$, to avoid dealing with complex values.
\begin{proposition}
    Let $\Omega= \bigotimes_{j} \rho_j$ be a bosonic quantum state where each  $\rho_j$ belongs to a multimode bosonic Fock space $\mathcal{H}$. The multivariate trace $\tr{\rho_1\rho_2...\rho_n}$ can be estimated with probability $1-\delta$ and precision $\epsilon$ with  $O(\epsilon^{-2}\ln\delta^{-1})$ samples.
\end{proposition}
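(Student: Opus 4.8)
\emph{Proof proposal.} The plan is to observe that, by the reformulation of Theorem~\ref{theo:violation_suplaws} noted immediately after its statement, the target quantity is literally an expectation value over the measurement statistics of our linear-optical circuit: $X_1=\tr{\rho_1\rho_2\dots\rho_M}=\mathbb{E}_{\vec{S}\sim D_{\vec{S}}}[\omega^{f(\vec{S})}]$, with $\omega=\exp(2\pi i/M)$ and $f(\vec{S})$ the classical binning function of Eq.~\eqref{eq:f_of_S} applied to the photon-counting outcome of the Fourier interferometer on input $\Omega$. Hence estimating $\tr{\rho_1\dots\rho_M}$ reduces to the standard statistical task of estimating the mean of a random variable of modulus at most $1$ from independent samples, and the whole content of the proposition is a concentration argument.

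Concretely, I would first write $X_1=\mathbb{E}[A]+i\,\mathbb{E}[B]$ with the two real-valued random variables $A=\cos(2\pi f(\vec{S})/M)$ and $B=\sin(2\pi f(\vec{S})/M)$, each taking values in $[-1,1]$ and each a deterministic (classically computed) function of a single experimental sample $\vec{S}$. Given $N$ independent runs $\vec{S}^{(1)},\dots,\vec{S}^{(N)}$, the estimator is the empirical mean $\hat{X}_1=\frac{1}{N}\sum_{\ell=1}^{N}\omega^{f(\vec{S}^{(\ell)})}=\bar{A}_N+i\,\bar{B}_N$. Applying Hoeffding's inequality to $A$ and to $B$ (range $2$, so $\Pr[\,|\bar{A}_N-\mathbb{E}A|\ge t\,]\le 2e^{-Nt^2/2}$, and likewise for $B$), a union bound gives $\Pr[\,|\bar{A}_N-\mathbb{E}A|\ge t\ \text{or}\ |\bar{B}_N-\mathbb{E}B|\ge t\,]\le 4e^{-Nt^2/2}$. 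Choosing $t=\epsilon/\sqrt{2}$ forces $|\hat{X}_1-X_1|=\sqrt{(\bar{A}_N-\mathbb{E}A)^2+(\bar{B}_N-\mathbb{E}B)^2}\le\epsilon$ on the complementary event, whose probability is at least $1-4e^{-N\epsilon^2/4}$. Requiring this to be at least $1-\delta$ yields $N\ge 4\epsilon^{-2}\ln(4/\delta)=O(\epsilon^{-2}\ln\delta^{-1})$, which is the claimed bound.

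The point worth stressing, and the only place where one must be a little careful, is the $M$-agnostic character of the bound: $X_1$ is generically complex, so naive real-valued Hoeffding does not apply directly, and the fix is precisely the real/imaginary split above, whose crucial feature is that $|A|,|B|\le 1$ \emph{independently of $M$ and of the dimension of $\mathcal{H}$}; everything that grows with $M$ (the size of the Fourier interferometer, the number of outcome bins, the postprocessing map $f$) enters only the circuit description, not the number of samples. I would also flag the tempting but suboptimal alternative of first estimating all $M$ binned probabilities $P_j$ and reconstructing $X_1=\sum_j\omega^j P_j$: controlling $|\hat{X}_1-X_1|\le\sum_j|\hat{P}_j-P_j|$ through the empirical multinomial would cost an $M$-dependent factor, which is why one should estimate $X_1$ directly as the empirical mean of $\omega^{f(\vec{S})}$. (As a side remark, the same $N$ samples simultaneously give all $X_k=\mathbb{E}[\omega^{kf(\vec{S})}]$, $k=1,\dots,M-1$, each with the same guarantee, since each is again an empirical mean of a modulus-$1$ variable.) This concentration step is essentially the whole proof; there is no real obstacle, only the bookkeeping of constants and the care needed to keep the dependence on $M$ out of the sample count.
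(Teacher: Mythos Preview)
Your argument is correct and in fact tighter than the paper's own. The paper proceeds by observing that each $P_j=\mathbb{E}[\delta_{f(\vec{S}),j}]$ is the mean of a Bernoulli variable, bounds $|\hat{P}_j-P_j|\le\epsilon$ via Hoeffding, and then asserts in one line that ``its Fourier transform $\{X_k\}$ yields the same error scaling.'' That last step is exactly the route you flagged as suboptimal: taken literally, the triangle inequality gives only $|\hat X_1-X_1|\le\sum_j|\hat P_j-P_j|\le M\epsilon$, which is $M$-dependent. Your direct estimator $\hat X_1=\frac{1}{N}\sum_\ell\omega^{f(\vec S^{(\ell)})}$, analyzed via Hoeffding on the real and imaginary parts, delivers the $M$-free bound cleanly.

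It is worth noting that the two estimators actually coincide numerically, since $\sum_j\omega^j\hat P_j=\frac{1}{N}\sum_\ell\omega^{f(\vec S^{(\ell)})}$ when the $\hat P_j$ are the empirical bin frequencies from the same $N$ samples; the difference is purely in the error analysis, and yours is the one that justifies the stated scaling. So your proof and the paper's are morally the same protocol, but your concentration argument is the rigorous one, and your explicit remark about the pitfall of the $P_j$-first bookkeeping is precisely the point the paper's appendix glosses over.
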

\begin{proof}
Notice that $P_{j}$ is a Bernoulli random variable, either $f(\vec{S})=j$ or not. By using the Hoeffding's inequality, let $Y_{1},\dots,Y_{N}$ be i.i.d. Bernoulli random variables, and let us define: $Z_{N}=\sum_{i=1}^{N}Y_{i}$, then we have that:
\begin{equation}
    Pr{\bigg [}|Z_{N}-E[Z_{N}]|\geq \epsilon{\bigg ]}\leq 2\exp(-2N\epsilon^{2})
\end{equation}
We want that this quantity is smaller than a failure probability $\delta$ , which implies :
\begin{equation}
    N\geq \frac{1}{2\epsilon^{2}}\ln\frac{2}{\delta}
\end{equation} 
Thus $P_{j}$ can be estimated with probability $1-\delta$ and precision $\epsilon$ in $O(\frac{1}{2\epsilon^{2}}\ln(\frac{2}{\delta}))$. If $\{P_{j}\}$ can be estimated up to $\epsilon$ precision, its Fourier Transform $\{X_{k}\}$ yields the same error scaling.
\end{proof}

\newpage

\end{document}